\newcommand{\xmark}{\ding{55}} 
\newtheorem{theorem}{Theorem}
\newtheorem{definition}[theorem]{Definition}
\newtheorem{lemma}[theorem]{Lemma}
\newtheorem{notation}[theorem]{Notation}
\newtheorem{assumption}[theorem]{Assumption}
\renewenvironment{proof}[1][Proof]{\noindent\textbf{#1.} }{\ \rule{0.5em}{0.5em}}
\title{\LARGE \bf
Iterative Motion Planning in Multi-agent Systems with Opportunistic Communication under Disturbance}
\author{Neelanga Thelasingha$^{1}$, Agung Julius$^{1}$, James Humann$^{2}$, James Dotterweich$^{2}$
\thanks{*This work was supported by DEVCOM Army Research Laboratory under cooperative research agreement W911NF-21-2-0283 79054-CI-ARL}
\thanks{$^{1}$Neelanga Thelasingha and Agung Julius are with the Dept. of Electrical, Computer, and Systems Engineering, Rensselaer Polytechnic Institute, NY, USA.        {\tt\small thelan@rpi.edu, agung@ecse.rpi.edu}}%
\thanks{$^{2}$James Humann and James Dotterweich are with DEVCOM Army Research Laboratory, MD, USA. {\tt\small  james.d.humann.civ@army.mil}
}
\thanks{ \tt\small james.m.dotterweich.civ@army.mil}
}
\begin{document}

\maketitle
\thispagestyle{empty}
\pagestyle{empty}


\begin{abstract}
In complex multi-agent systems involving heterogeneous teams, uncertainty arises from numerous sources like environmental disturbances, model inaccuracies, and changing tasks. This causes planned trajectories to become infeasible, requiring replanning. Further, different communication architectures used in multi-agent systems give rise to asymmetric knowledge of planned trajectories across the agents. In such systems, replanning must be done in a communication-aware fashion. This paper establishes the conditions for synchronization and feasibility in epistemic planning scenarios introduced by opportunistic communication architectures. We also establish conditions on task satisfaction based on quantified recoverability of disturbances in an iterative planning scheme.  We further validate these theoretical results experimentally in a UAV--UGV task assignment problem.
\end{abstract}

\section{INTRODUCTION}

In multi-agent systems involving heterogeneous agents, uncertainty is introduced from numerous sources. In such instances, planning techniques must be robust and establish essential guarantees of performance \cite{chen2021scalable}. Mainly, disturbances occur due to external interactions, for example obstacles, turbulence  and modeling inaccuracies \cite{du2021cooperative,cheng2021fixed}. Further, most centralized communication methods expect agents to be constantly connected and in sync \cite{chriki:hal-02365868}. This is not feasible in real-world implementations. The routing command progression through the network decides the plan version on each agent. Therefore, partial knowledge exists where each agent's plan version depends on the synchronization. In this context, handling disturbances and replanning becomes a challenging problem due to asymmetrical knowledge. In this paper, we set up this problem in an opportunistic communication setting and derive requirements to enable plan synchronization when replanning. Then, we propose techniques to quantify disturbances, algorithms to handle them and establish requirements for task satisfaction. 

While disturbances cause the agents to replan, the task can change and lead to infeasibility \cite{mezgebe2020multi}. In literature, formulating planning tasks as Markov Decision Processes is a widely used approach \cite{Spaan2012}. However, increasing state dimensions increases the computational complexity \cite{taylor2019dynamic}. While data-driven methods are used to learn policies for task satisfaction, they are not sample-efficient \cite{yan2022optimal}. Meanwhile, receding horizon control \cite{grancharova2015uavs,tase} is used to periodically trigger replanning to compensate for the uncertainties \cite{YAO2016131}. Similar approaches are proposed in \cite{zhang2020distributed} using adaptive control. However, they do not consider the recoverability from disturbance. Thus, we identify that necessary conditions for recoverability must be derived to achieve task satisfaction.

In addition, a disturbance may lead to infeasibility due to the communication architecture. Some research suggests ad-hoc decentralized architectures, including widely used mesh networks\cite{Jawhar2017CommunicationAN} while multi-layer architectures operate in hierarchical interconnected sub-groups \cite{luan2021hierarchical,Sun2019ResearchOU}. The command routing strategy is also critical. Common topology-based static routing can be error-prone \cite{ChengHKV07} while reactive on-demand routing overcomes this by forming links as required. However, this gives rise to asymmetrical knowledge of plans across the system. In any architecture, the replan schedule depends on the speed of plan update propagation to agents without violating their known plan. Hence, each re-planning step must be communication-aware \cite{mardani2019communication}. To this end, formalisms such as "beliefs" have been used with epistemic logic \cite{bolander2011epistemic} to handle the problem of partial knowledge. In this paper, we analyze plan synchronization in an opportunistic communication setting. With the beliefs defined, we propose algorithms for planning and execution under opportunistic communication under disturbances. We also derive the necessary conditions to ensure task satisfaction.


We include a comparison of related work in multi-agent planning for task assignments in Table \ref{tab:comparison} highlighting our contributions. Specifically, we compare the disturbance handling strategy, the communication framework and synchronization approaches. In this paper, we explore synchronization challenges that arise in a multi-agent system due to opportunistic communication. We establish necessary conditions for plan synchronization and propose an algorithm for replanning. When the plans are executed by each agent, we propose a quantification of the recoverability of a disturbance and propose a recovery algorithm to handle the disturbances. We present the theoretical results on performance guarantees in disturbance handling and derive conditions for task satisfaction. Then, we present the results of experiments validating the theoretical results in a UAV-UGV task site assignment. 

\begin{table*}[t!]
    \centering
    \begin{tabular}{cccccc}
        \hline
       \multicolumn{1}{p{2cm}}{\centering Related Work} & \multicolumn{1}{p{2.1cm}}{\centering Disturbance Handling } & \multicolumn{1}{p{2.1cm}}{\centering Conditions on Satisfaction} & \multicolumn{1}{p{2.1cm}}{\centering Communication Topology} &   \multicolumn{1}{p{2.1cm}}{\centering Collaborative Tasks} & \multicolumn{1}{p{2.1cm}}{\centering Synchronization Mode}\\ \hline
        
        \cite{6224792,schillinger2018simultaneous} &  \xmark &  \xmark  & Not connected &  \xmark &  No sync \\
        \hline
        \cite{kantaros2020stylus,luo2022temporal,luo2021abstraction} &  \xmark &  \xmark  & Fully connected &  \checkmark &  Full \\
        \hline

    \cite{guo2015multi,ulusoy2013optimality,moarref2017decentralized} &  \xmark &  \xmark & Fully connected &  \checkmark &  Opportunistic \\
        \hline

        \cite{luan2021hierarchical,Khare2008AdhocNO} &  \xmark &  \xmark  & Dynamic  &  \checkmark &  Opportunistic \\
        \hline

    \cite{yan2022optimal,zhang2020distributed} &  \checkmark &  \xmark & Fully connected &  \checkmark &  Full \\
        \hline

    \cite{grancharova2015uavs} &  \checkmark &  \xmark  & Dynamic &  \checkmark &  Full \\
        \hline
        

 \textbf{Proposed approach} &  \checkmark &  \checkmark  & Hybrid &  \checkmark &  Opportunistic \\
        \hline

    \end{tabular}
    \caption{Comparison of Related Work : In communication topology, Fully connected -- Agents maintain a constant connection, Dynamic -- Agents dynamically trigger connections, Hybrid -- Both constant and dynamic connections. In synchronization mode, Full -- Agents are in sync at all times, Opportunistic -- Agents selectively sync triggered by events.}
    \label{tab:comparison}
\end{table*}

\section{Mathematical Framework}

In this section, we define the mathematical framework for analyzing the disturbances occurring in iterative multi-agent planning solutions for generalized task site assignment problems. We also state the necessary concepts for analyzing opportunistic communication.

\subsection{Generalized Task Site Assignment}

We represent the task site assignment for a multi-agent system as the planning problem in focus. It presents a set of task sites that need to be visited by an agent at some point in time, satisfying system constraints like unit movement and energy consumption. The task assignment is dynamic, and at each time step, the motion plan of the agents must satisfy the current assignment. A feasible solution is a plan that meets the requirement of visiting each task site. However, further optimization could improve a defined cost of the plan. Typically, these problems are solved by vehicle routing problem (VRP) solvers, though they are often challenging to scale. In this work, we look at an iterative solution architecture that uses opportunistic communication within agents. The synthesis and mathematical analysis of solutions require a proper modeling framework representing the dynamics in a heterogeneous multi-dimensional statespace.

\subsection{Preliminaries}

\begin{figure}[h!]
  \centering
  \includegraphics[width=0.5\textwidth]{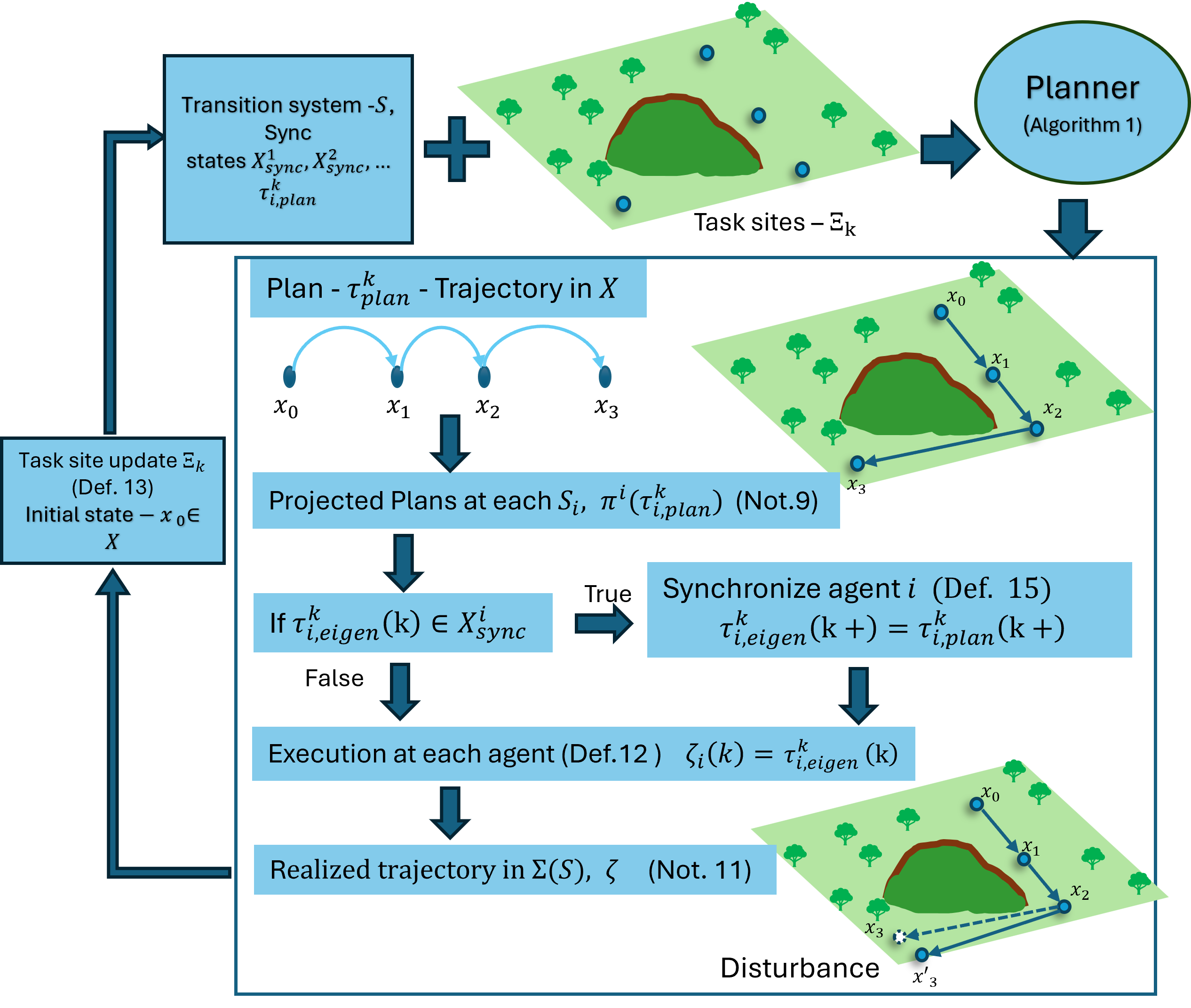} 
  \caption{Planning - Execution Scheme}
  \label{fig:scheme}
\end{figure}

We use transition systems as the mathematical model that describes a multi-agent system \cite{tabuada_book}. The multi-agent system dynamics can be represented as the transition system's state evolution through time. First, we formally define the agent transition system as follows and multi-agent transition systems as a composition of several agent transition systems. 


\begin{notation}
[Agent Transition System]
An agent transition system is defined for an agent $i$ as a pair $S_i=(X_i,T_i)$, where $X_i$ is the state space (or set of states), $T_i\subset X_i \times X_i$ is the set of transitions and $t=(x,x^{\prime})$ denotes a transition from a state $x\in X_i$ to the state $x^{\prime}\in X_i$. The physical interpretation of a transition is the evolution of the state of the agent through time in a given duration. We assume transitions take constant time throughout this paper.
\end{notation}

\begin{notation}
[Multi-agent Transition System]
Given a set of agent transition systems $S_i=(X_i,T_i)$ for agents $i=1,\dotsc,N_A$, a transition system $S=(X,T)=S_1\times \dotsc \times  S_{N_A}$ where the set of states $X=X_1\times\dotsc \times X_{N_A}$, and the set of transitions $T\triangleq \{ (x,x^\prime)\ |\ x=[x_1,\dotsc,x_{N_A}]\in X,\  x^\prime=[x^\prime_1,\dotsc,x^\prime_{N_A}]\in X,\ \forall i\ (x_i,x^\prime_i)\in T_i \}$, is a multi-agent transition system.
\end{notation}

The evolution of the state within a transition system through a sequence of transitions is defined as a trajectory.

\begin{definition}
[Trajectory] For any positive integer $N$ and a transition system $S=(X,T)$, a sequence of states $\tau
:=x_{0}\cdots x_{N}$, 
is said to be a trajectory of $S$ iff  $t=(x_{k},x_{k+1}) \in T$ for all $ k\in\{0,\cdots,N-1\}$.
$N$ is the length of trajectory $\tau$. The set of all trajectories of $S$ is denoted by $\Sigma(S)$. 
\end{definition}
\begin{notation}
    Any state $\hat{x}\in X$ or transition $t \in T$, is said to be in a trajectory $\tau =x_{0}\cdots x_{N}$ if $\hat{x}=x_{i}$ or $t=(x_{i},x_{i+1})$ respectively for some $i \in \{0,\cdots,N\}$. It is denoted by $\hat{x}\in \tau$ or $ t \in \tau$ respectively. Any state $x_k\in \tau$ at $k^{th}$ time step is denoted as $x_k=\tau(k)$. Also, we denote the sequence of states after any time step $k$ as $\tau(k+)$.
\end{notation}

The trajectory for the multi-agent system consists of trajectories for each agent as defined by the projected trajectories. 

\begin{definition}[Projection]
    Let \( S= (X,T) \) be the multi-agent transition system and \( S_i= (X_i,T_i) \) the transition system for agent \( i \). We define a projection operator \( \pi_i \colon X \to X_i \) which extracts the state of agent \( i \) from a multi-agent state, such that,
\[
\pi_i(x) = x^i, \quad \text{for } x = [x^1, x^2, \dots, x^{N_A}] \in X.
\]
We lift this notation to trajectories by defining a projection operator \( \Pi_i \colon \Sigma(S) \to \Sigma(S_i) \). For a multi-agent trajectory \( \tau = x_k x_{k+1} \dotsc \in \Sigma(S) \), the $i^{th}$ agent trajectory is given by the  projection such that,
    \[
    \Pi_i(\tau) = \pi_i(x_k) \pi_i(x_{k+1}) \dotsc = x^i_k x^i_{k+1} \dotsc \in \Sigma(S_i).
    \]
\end{definition}

\begin{definition} [Task Site Assignment] For a multi-agent transition system $S=(X,T)$ of $N_A$ agents, a task site assignment is a collection of state classes, $\Xi=\{\hat{X_1},\dotsc,\hat{X_{N_x}}\}$, where each $\hat{X_j}\subset X$. Here $N_x$ is the cardinality of $\Xi$. An agent $i$ is said to satisfy a $\hat{X_j}\subset X$ at a time step $k$ if the state of $i$, $x_i=\pi_i(x_k)$ such that $x_k \in \hat{X_j}$. A trajectory $\tau$ satisfies the task site assignment $\Xi$, if for all $j=1,\dotsc, N_x$, there exists some agent $i$ that satisfies $\hat{X_j} \in \Xi$ at some time stamp $k$ such that $x_i=\pi_i(\tau(k))$ where $\tau(k) \in \hat{X_j}$. It is denoted as $\tau \models \Xi$.
\end{definition}




The goal of the planning problem is to find a task-satisfying trajectory for execution. This is done by the Task Site Assignment (TSA) planner for the multi-agent system. Within the TSA planner, we assume the availability of a solver as in \cite{tase} that can generate trajectories to satisfy the task site assignment.

\begin{assumption} [TSA Planner]\label{def:oracle} 
Given a multi-agent transition system $S$, an initial state $x_0\in X$, and a feasible task site assignment $\Xi$, it is assumed that there is a solver $\mathcal{O}$ that generates task satisfying trajectories. $\mathcal{O}$ manifests a trajectory $\tau \models \Xi$ such that $x_0 = \tau(0)$. The TSA planner calls $\mathcal{O}$ to generate plans and communicate to agents.
\end{assumption}

As the plan updates over time, TSA planner and agents have varying knowledge of the plan. Therefore, we define the following notations to identify the plan versions.

\begin{notation}
     For a multi-agent transition system $S$ at time step $k$, $\tau^{k}_{plan}(k+)\in \Sigma(S)$ denotes the plan for all agents in $S$ as known by the TSA planner at $k$. 
\end{notation}

\begin{notation}
    For an agent transition system $S_i$ at a time step $k$, the trajectory $\tau^{k}_{i,eigen} \in \Sigma(S_i)$ denotes the future plan for agent $i$ as known by agent $i$ at time step $k$. Similarly, the trajectory $\tau^{k}_{i,plan} \in \Sigma(S_i)$ denotes the future plan for agent $i$ as known by the TSA planner at time step $k$, that is, $\tau^{k}_{i,plan}$  is the belief of $\tau^{k}_{i,eigen} $ by the planner.
\end{notation}

The eigen trajectories are realized in the system following the planning-execution scheme in Fig. \ref{fig:scheme}. Next, we define the realized trajectory for an agent transition system and extend it to multi-agent trasnition systems as follows.

\begin{definition}[Realized Trajectory]
    For an agent transition system, $S_i=(X_i,T_i)$, the realized trajectory is a trajectory followed by the agent from time step $0$ to any future time step $k>0$. We denote a realized trajectory of agent $i$ as $\zeta_i= x^i_{0}x^i_{1}\cdots x^i_{k}$ where states $x^i_j \in X_i$ for all $j$, transition $(x^i_{j},x^i_{j+1})\in T_i$ for all $j$ and  $\zeta_i \in \Sigma(S_i)$. The state at any time step $j$ is denoted by $\zeta_i(j)$. Also, we denote the sequence of states after any time step $k$ as $\zeta(k+)$.
\end{definition}

\begin{notation}\label{not:realized_t}
    For a multi-agent system $S=(X,T)$, agent transition systems $S_i=(X_i,T_i)$ and realized trajectories $\zeta_i$ for $i=1,\dotsc,N_A$, we denote the realized trajectory of $S$ as $\zeta=x_0 \dotsc x_k$ where each $x_j=[x^1_j,\dotsc,x^{N_A}_j]\in X$ and $x^i_j=\zeta_i(j)$ for all $j=0,\dotsc,k$ and $i=1,\dotsc,N_A$.
\end{notation}

The realization of a trajectory is done by executing the plan by each agent. We define perfect execution assuming the absence of any disturbance.

\begin{definition}[Perfect Execution]\label{def:executn}
    Let $S_i$ be an agent transition system and a trajectory $\tau^{k}_{i,eigen}(k+)\in \Sigma(S_i)$ be the plan known to agent $i$ at a time step $k$. The realized trajectory is related to the time-varying plan at time step $k$ by $\zeta_i(k)=\tau^{k}_{i,eigen}(k)$.
\end{definition}

At the execution the initial state is updated as the final state of the system for the next planning iteration at the planner, as shown in Fig. \ref{fig:scheme}. At a plan-execute time step $k$, let $x^k_0$ denote the initial state, $\Xi^k$ denote the current task site assignment, and $\tau^k$ denote the future trajectory(plan) generated by a solver for $x^k_0$ and $\Xi^k$ such that $x^k_0 = \zeta(k-1)$ where  $\zeta$ is the realized trajectory at time step $k-1$. 

\begin{definition}[Task site update]
    Given a task site assignment $\Xi^k$ at time step $k$, $\Xi^k$ is updated to  $\Xi^{k+1}$ at time step $k+1$ such that as $\Xi^{k+1}=(\Xi^{k}\setminus \Bar{\Xi}^{k+1}_{-}) \cup \Bar{\Xi}^{k+1}_{+}$, where $\Bar{\Xi}^{k+1}_{-}$ denotes the set of state classes that are removed from $\Xi^{k}$ and $\Bar{\Xi}^{k+1}_{+}$ denotes the set of state classes that are added to $\Xi^{k}$.
\end{definition}
    
After executing the time step $k$, the initial state and task site assignment are updated to $x^{k+1}_0=\zeta(k)$ and $\Xi^{k+1}$, which are then used to generate the plan at $k+1$, $\tau^{k+1}$. 

\section{Requirements for Plan Synchronization}

\subsection{Opportunistic Communication}

The challenge of handling disturbances is exacerbated by the communication protocol used by the multi-agent system. We explore this in the context of opportunistic communication. By definition, a multi-agent transition system is constructed as a composition of at least one single-agent transition system. When the planner generates plans for the multi-agent system, it includes plans for all the agents. However, due to opportunistic communication, generated plans are not instantly communicated.


The projected trajectories for agents $\Pi^i(\tau^{k}_{plan})$, are not instantly communicated. This happens opportunistically through synchronization between agents and planner. Given a multi-agent transition system $S$ of $N_A$ agents, at any time step $k$, synchronization is the communication of future plans $\tau^{k}_{plan}(k+)$ to agent $i$, that is setting $\tau^{k}_{i,eigen}(k+)=\Pi^i(\tau^{k}_{plan}(k+))$ where $i=1,\dotsc,N_A$. This is limited to certain states defined by the system and task definition. Thus, we define synchronization states as follows.


    

\begin{definition} [Synchronization states]
    For a multi-agent transition system $S$ and an agent $i$ where $i=1,\dotsc,N_A$, we denote $X^i_{sync} \subset X$ as the synchronization state for the $i^{th}$ agent where plan synchronization for $i$ is possible.
\end{definition}

Synchronization is possible if the system state is in $X^i_{sync}$. This restricts the agents' opportunities for plan updates and creates mismatches in the knowledge of the agent and planner on future plans.
\begin{definition} [Opportunistic Synchronization]\label{def:syncact}
    For $S$, a multi-agent transition system, $S_i$, agent transition systems where $i=1,\dotsc, N_A$, and synchronization states for each agent $i$ $X^i_{sync} \subset X$, let the plan in $S$ be $\tau^{k}_{plan}(k+)$. Then synchronization for agent $i$ occurs as below.
    \begin{equation}
    \tau^{k}_{i,eigen}(k+)= \begin{cases}
\Pi^i(\tau^{k}_{plan}(k+)),&\text{if } x_k \in X^i_{sync},\\
\tau^{k-1}_{i,eigen}(k+),&\text{if } x_k \notin X^i_{sync},
\end{cases}
\end{equation}
and, 
 \begin{equation}
    \tau^{k}_{i,plan}(k+)= \begin{cases}
\tau^{k}_{i,eigen}(k+),&\text{if } x_k \in X^i_{sync},\\
\tau^{k-1}_{i,plan}(k+),&\text{if } x_k \notin X^i_{sync},
\end{cases}
\end{equation}
where $x_k=\tau^{k}_{plan}(k)$ is the multi-agent system state at $k$. At $k=0$, we assume full synchronization.
\end{definition}

Ensuring synchronization is critical in disturbance handling under opportunistic communication scenarios, as shown in Fig. \ref{fig:improvem}. Thus, a planning and synchronization framework that supports epistemic constructs is essential.

\begin{figure}[h]
\centering
\includegraphics[width=0.5\textwidth]{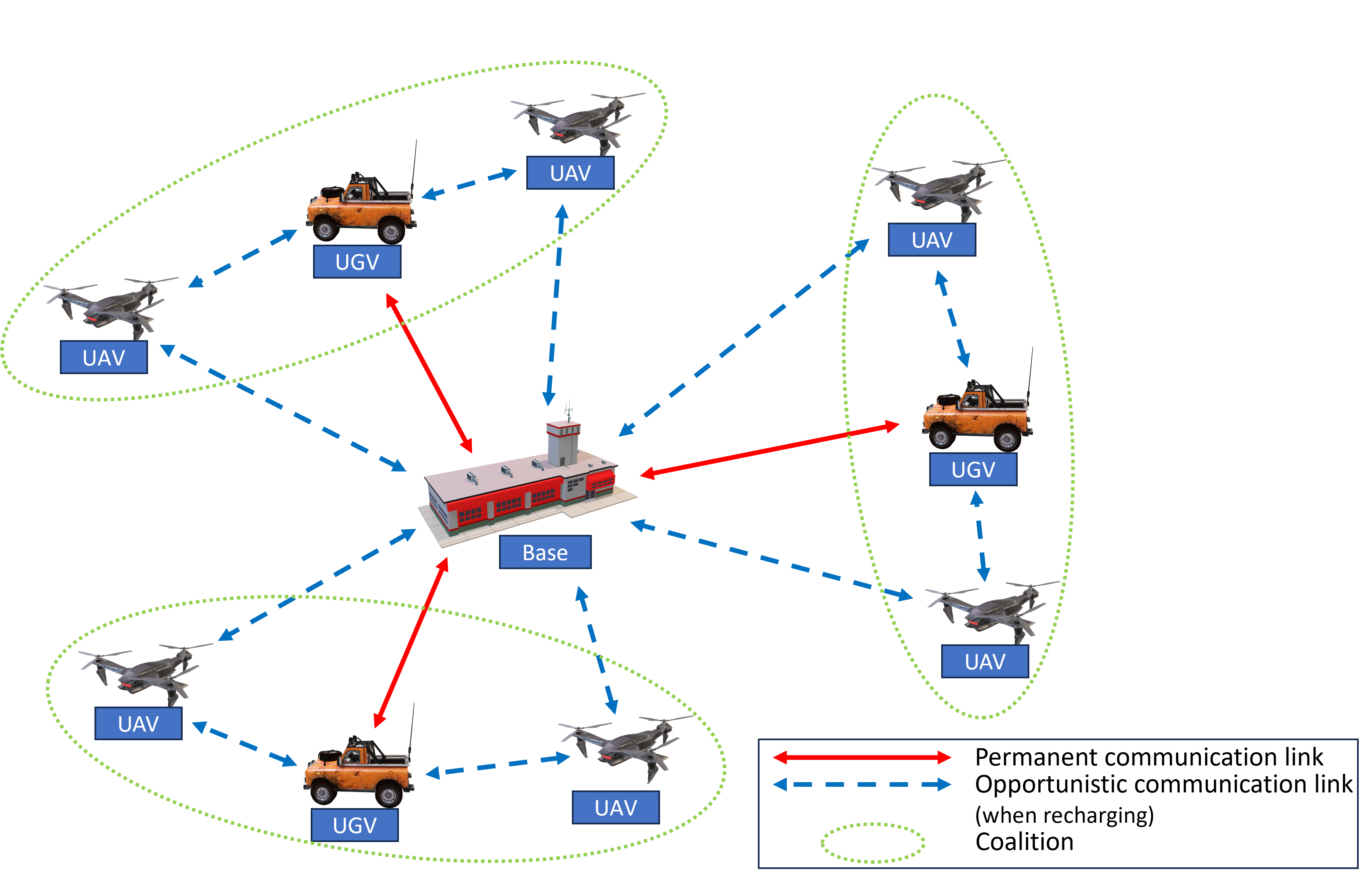}
\caption{Communication architecture, UAV-UGV coalitions, permanent and opportunistic links. UGVs and bases are always in sync. UAVs establish links opportunistically at synchronization states with bases and UGVs in their coalitions.}
\label{fig:improvem}
\end{figure}

\subsection{Synchronization Requirements}

Due to opportunistic communication, the plans for the multi-agent system are communicated only at sync states. Therefore, the planner must account for opportunistic synchronization. To this end, we state Algorithm \ref{alg:planning} for planning with synchronization, which ensures that the plans generated for the multi-agent system at each planning point satisfy the synchronization conditions. The plans generated from Algorithm \ref{alg:planning}, when executed, satisfy the task site assignment. Algorithm \ref{alg:planning} is called at each step to generate a plan for the multi-agent system for the current task site assignment $\Xi^k$. We state the following assumption on the existence of a constraint solver for algorithm \ref{alg:planning}.


\begin{assumption}[Constraint Solver]\label{def:consolve}  
Let \( S = (X, T) \) be a multi-agent transition system, and \( \Xi_k \) a feasible task site assignment at time $k$ and $\tau^{k-1}_{i,plan}(\kappa_i)$, the plan at time $k-1$. We assume the existence of a constraint solver, $\mathcal{O}(S, x^k, \Xi^k, \Phi, \Psi)$ that generates $ \tau^k_{plan} \in \Sigma(S)$ where, \( x^k \in X \) is the system state at time step \( k \), \( \Phi = \{(i, \kappa_i)\} \) specifies agent \( i \)'s trajectory points such that  \( \tau^k_{plan}(\kappa_i) = \tau^{k-1}_{i,plan}(\kappa_i)\), and  \( \Psi = \{(i, k^*_i)\} \) specifies the next synchronization states \( \tau^k_{plan}(k^*_i) \in X^i_{sync} \).
\end{assumption}

 The planner calls the constraint solver to generate the plan, but not all agents are in synchronization. For the agents that are not in synchronization, the planners' version of agents' eigen plan ($\tau^{k-1}_{i,plan}$) is used to define constraints on the new plan, such that the segment of agent $i$s' plan until the next expected synchronization is kept unchanged in the new plan $\tau^{k}_{plan}$. Thus, Algorithm  \ref{alg:planning} ensures synchronization for the absent agents when the new plan is generated.

\begin{algorithm}
\RestyleAlgo{ruled}
\caption{Planning with Opportunistic Synchronization}\label{alg:planning}
\KwData{ $S=(X,T)$- Multi-agent transition system,  $N_A$- Number of agents,  $\Xi^k $ -  Task site assignments at time step $k$, $x^k\in X$ - System state at $k$, $\tau^{k-1}_{plan}$ - Current plan for multi-agent system,  $\mathcal{O}$ - Solver, For all $i \in \{1,\dotsc, N_A\}$ $\tau^{k-1}_{i,plan}$ - Current plan for agent $i$ as known by planner,  $X^i_{sync}$ - Synchronization states for $i$.}
\KwResult{$\tau^{k}_{plan}$}
$\Phi \gets \emptyset \subseteq \mathbb{R}^2$\;
$\Psi \gets \emptyset \subseteq \mathbb{R}^2$\;
 \For{$i=1,\dotsc,N_A$}{
 \If{$x^k \notin X^i_{sync}$}
 {
 $k^*_i \gets \min \{k_i>k \ |\ \tau^{k-1}_{plan}(k_i) \in  X^i_{sync} \}$\;
 Add $(i,k^*_i)$ to $\Psi$\;
 $\forall \kappa_i \in \{k,\dotsc, k^*_i\} $, add $(i,\kappa_i)$ to $\Phi$ \;
 }
 }
 Call Solver $\mathcal{O}$ with $x^k$ to generate a $\tau^{k}_{plan} \models \Xi^k $ s. t constraint $\pi^{i}(\tau^{k}_{plan}(\kappa_i))=\tau^{k-1}_{i,plan}(\kappa_i))$ is satisfied $\forall (i,\kappa_i)\in\Phi$ and constraint $\tau^{k}_{plan}(k^*_{i}))\in X^{i}_{sync}$  is satisfied $\forall (i,k^*_{i})\in \Psi$ \;
 \Return $\tau^{k}_{plan}$\;
\end{algorithm}

We state the following theorem on task satisfaction.
\begin{theorem}[Task Satisfaction] \label{theo:satisfaction}
    For a multi-agent transition system $S$ of $N_A$ agents, let $\Xi_k$ be the task site assignment at any time step $k$. Given a plan $\tau^k_{plan}$ that is generated from Algorithm \ref{alg:planning}, and assuming perfect execution, the realized trajectory after $k$, $\zeta(k+)$, satisfies $\Xi_k$. That is, $\zeta(k+) \models \Xi_k$, if for all time steps $\gamma > k$, $\Bar{\Xi}^{\gamma}_{-}=\Bar{\Xi}^{\gamma}_{+}=\emptyset$.
\end{theorem}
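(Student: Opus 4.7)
The plan is to proceed by induction on the replanning step $\gamma \geq k$, maintaining as invariant the claim that the hybrid trajectory obtained by concatenating the realized prefix $\zeta(k)\zeta(k+1)\dotsc\zeta(\gamma-1)$ with the currently planned suffix $\tau^{\gamma}_{plan}(\gamma+)$ satisfies $\Xi^{k}$. This invariant lets me leverage Assumption~\ref{def:consolve} at every step without having to argue globally about a single plan.

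First, I would establish the base case $\gamma = k$. Algorithm~\ref{alg:planning} invokes the solver $\mathcal{O}$ with state $x^{k}$ and the task assignment $\Xi^{k}$, so the returned plan $\tau^{k}_{plan}$ satisfies $\Xi^{k}$ by construction. The realized prefix is empty at this point, so the invariant reduces to $\tau^{k}_{plan}(k+) \models \Xi^{k}$, which holds.

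Next, I would carry out the inductive step. Suppose the invariant holds at some $\gamma \geq k$. By Definition~\ref{def:executn} (perfect execution), combined with Definition~\ref{def:syncact} and the constraint set $\Phi$ imposed in Algorithm~\ref{alg:planning}, the realized multi-agent state $\zeta(\gamma)$ coincides with the planned state $\tau^{\gamma}_{plan}(\gamma)$: for synchronized agents this is immediate from Definition~\ref{def:syncact}, while for unsynchronized agents the constraints in $\Phi$ force the planner's belief and the agent's eigen plan to agree on the relevant indices. At time $\gamma+1$, Algorithm~\ref{alg:planning} is invoked again with $x^{\gamma+1} = \zeta(\gamma)$ and $\Xi^{\gamma+1}$. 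The hypothesis $\bar{\Xi}^{\gamma+1}_{-} = \bar{\Xi}^{\gamma+1}_{+} = \emptyset$ yields $\Xi^{\gamma+1} = \Xi^{k}$, so Assumption~\ref{def:consolve} produces $\tau^{\gamma+1}_{plan}(\gamma{+}1{+}) \models \Xi^{k}$ from state $\zeta(\gamma)$; concatenation with $\zeta(k)\dotsc\zeta(\gamma)$ preserves the invariant at $\gamma+1$.

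Finally, to pass from the loop invariant to the theorem's conclusion, I would argue that because each call to $\mathcal{O}$ yields a plan of finite length, after sufficiently many steps the planned suffix is exhausted and the invariant reduces to the realized portion alone satisfying $\Xi^{k}$. The main obstacle, and the step I would scrutinize most carefully, is justifying that the task-site visits promised by successive finite-length plans are actually executed rather than indefinitely rescheduled into the future across re-plans. The key lever is the constraint set $\Phi$ in Algorithm~\ref{alg:planning}, which locks the prefix of each unsynchronized agent's plan until its next synchronization opportunity, so any visit that has been committed by such an agent is carried through to execution; for currently-synchronized agents the solver is free to reshuffle, but each replacement plan must still satisfy the full fixed $\Xi^{k}$ from the current state. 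Combined with finiteness of the solver's outputs, this yields $\zeta(k+) \models \Xi^{k}$.
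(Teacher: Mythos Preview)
Your approach is genuinely different from the paper's, and it has a gap at exactly the spot you yourself flag.

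The paper does not induct over replanning rounds at all. Its argument is a direct identification: using the $\Phi$ constraints in Algorithm~\ref{alg:planning} together with Definition~\ref{def:syncact}, it shows that for every agent $i$ and every step after $k$ one has $\tau^{k}_{i,eigen}(k+)=\Pi^{i}(\tau^{k}_{plan}(k+))$; perfect execution (Definition~\ref{def:executn}) then gives $\zeta_i(k+)=\Pi^{i}(\tau^{k}_{plan}(k+))$, hence $\zeta(k+)=\tau^{k}_{plan}(k+)$, and since $\tau^{k}_{plan}(k+)\models\Xi_k$ the conclusion follows. The realized trajectory is pinned to a \emph{single fixed} satisfying plan, so the rescheduling issue never arises.

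Your loop invariant, by contrast, is too weak to yield the conclusion. Because each fresh plan $\tau^{\gamma+1}_{plan}$ already satisfies $\Xi^{k}$ on its own (Assumption~\ref{def:consolve}), the statement ``realized prefix concatenated with current planned suffix satisfies $\Xi^{k}$'' carries no information about which task sites the realized prefix has actually visited; the invariant is preserved trivially even if no site is ever visited in execution. Your termination argument does not repair this: finiteness of each plan says nothing about the \emph{sequence} of plans, since $\tau^{\gamma+1}_{plan}$ is a new solver output, not a truncation of $\tau^{\gamma}_{plan}$, and there is no decreasing measure. The $\Phi$-locking you invoke protects only agents currently out of sync; for an agent that happens to be in $X^{i}_{sync}$ at every step, nothing in your argument prevents its assigned sites from being postponed indefinitely across replans. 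To close the gap along your route you would need a strictly stronger invariant (e.g., tracking the set of already-visited sites together with a bound on the residual plan length) plus a progress lemma; the paper sidesteps all of this by showing $\zeta(k+)$ equals one fixed plan rather than a moving target.
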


\begin{proof}
    If for all time steps $\gamma > k$, $\Bar{\Xi}^{\gamma}_{-}=\Bar{\Xi}^{\gamma}_{+}=\emptyset$, then $\Xi_{\gamma}=\Xi_{\gamma-1}=\Xi_k$. Thus, for all $\gamma>k$, $\tau^{\gamma}_{plan}\models \Xi_k$. 
    Therefore, if each agent $i=1,\dotsc,N_A$ executes, $\tau^{\gamma}_{i,plan}=\Pi^i(\tau^{\gamma}_{plan})$, then the multi agent system satisfies $\Xi_k$.
    However, agent $i$ always executes $\tau^{\gamma}_{i,eigen}$. Thus $\tau^{\gamma}_{i,eigen}$ must be updated to $\tau^{\gamma}_{i,plan}$ for task satisfaction. The next time step (after $k$) where this synchronization is possible for agent $i$ is $k^*_i = \inf \{k_i>k\ |\ \tau^{k}_{plan}(k_i) \in  X^i_{sync} \}$. Therefore, from Definition \ref{def:syncact}, for $\gamma \geq k^*_i$, the plan is synchronized for agent $i$ at $k^*_i$, that is $\tau^{k^*_i}_{i,eigen}( k^*_i+)=\Pi^i(\tau^{k^*_i}_{plan}( k^*_i+))$.
    From Algorithm \ref{alg:planning},  $\pi^i(\tau^{\gamma_i}_{plan}(\gamma_i))=\tau^{\gamma}_{i,eigen}(\gamma_i)$ for all $ \gamma_i \in \{k ,\dotsc, k^*_i\}$ for each agent $i$.  Therefore, for all time steps after $k$, $\tau^{k}_{i,eigen}(k+)=\Pi^i(\tau^{k}_{plan}(k+))$. From Definition \ref{def:executn}, we have $\zeta_i(\gamma)=\tau^{k}_{i,eigen}(\gamma)$ for the realized trajectory $\zeta_i$ from perfect execution of the plan for agent $i$ at any time step $\gamma>k$. Thus, $\zeta_i(k+)=\Pi^i(\tau^{k}_{plan}(k+))$. From Notation \ref{not:realized_t}, the realized trajectory is $\zeta(k+)=x_k x_{k+1} \dotsc $ where each $x_{\gamma}=[x^1_{\gamma},x^2_{\gamma},\dotsc,x^{N_A}_{\gamma}]\in X$ and $x^i_{\gamma}=\zeta_i({\gamma})=\pi^i(\tau^{k}_{plan}(\gamma))$ for all ${\gamma}\geq k$ and $i=1,2,\dotsc,N_A$. 
    Therefore, $\zeta(k+)=\tau^{k}_{plan}(k+)$. Since, $\tau^k_{plan}(k+) \models \Xi_k$, we have $\zeta(k+) \models \Xi_k$.\end{proof}

\section{Bounded Performance Degradation}

\subsection{Disturbance Formulation}

At each time step $k$, agent $i$ executes $\tau^k_{i,eigen}$, realizing the next step of the trajectory. Disturbances occur when realizing a trajectory and perfect execution is not achieved.



\begin{definition}
[Disturbance]
Given a multi-agent transition system $S$ of $N_A$ agents and a task site assignment $\Xi^k$ at time step $k$, let $\tau^k_{plan}(k+) \models \Xi^k$ be a trajectory in $\Sigma(S)$ generated from Algorithm \ref{alg:planning}. At the execution, let the realized trajectory of any agent $i$ be $\zeta_i$. A mismatch such that $\zeta_i(k) \neq \pi_i(\tau^k_{plan}(k))$ is said to be a disturbance $\Bar{x_k}$ at time step $k$ for agent $i$ where $\Bar{x_k}= \zeta_i(k)$.
\end{definition}

When a disturbance occurs at any time step $k$, the multi-agent system state ventures away from the plan $\tau^k_{plan}$. Therefore, the trajectory must be corrected for task satisfaction. The requirements and conditions for trajectory correction need to be identified. Also, guarantees must be made on the performance degradation resulting from the disturbance. 

\subsection{Disturbance Recovery}

Now, we analyze the performance degradation under disturbance. The goal is to establish performance guarantees using disturbance recoverability and provide bounds on possible performance degradation. We first define the following reachability property on the states of transition systems.

\begin{definition}
[n-step Backward Reachable Set] For an agent transition system $S_i=(X_i,T_i)$, the n-step backward reachable set for a state $x_k \in X_i$ is defined as  $Reach_n(x_k)= \{ x_{k-n} \in X_i\ |\ \exists \tau \in \Sigma^n (S_i) \text{ s.t. } \tau(k-n)=x_{k-n} \text{ and } \ \tau(k)=x_k \}$
\end{definition}

Thus, the n-step backward reachable set for any state $x \in X_i$ is the set of states $Reach_n(x)\in X_i$ from which a sequence of $n$ transitions exist to reach $x$. If a disturbance is contained within an n-step backward reachable set of a future state of the plan, the agent can return to the planned trajectory within n steps. We further extend this reachability formulation to analyze the recoverability of a disturbance. When a disturbance happens at execution, the perfect execution assumption no longer holds, and the previously planned trajectory is no longer followed. Thus, the disturbed agent must recover to the next possible state in the plan where synchronization is possible and update the planner. To this end, we state the following definition of n-step recoverability.





\begin{definition}  
[n-step Recoverability]\label{def:nsteprec}
At time step $k$, let $\tau^{k}_{i,eigen}(k+)$ denote the plan for agent $i$ as known by agent $i$. Let $\Bar{x_k}=\zeta_i(k)\neq \tau^{k}_{i,eigen}(k)$ be a disturbance where $\zeta_i(k)$ is the realized trajectory. If there exists an $n>0$, such that  $\Bar{x_k} \in Reach_n(\tau^{k}_{i,eigen}(k+n))$, then the disturbance $\Bar{x_k}$ is said to be $n$-step recoverable. 
\end{definition}

This allows the approach described in Algorithm \ref{alg:distrubance} for handling disturbance. In Algorithm \ref{alg:distrubance}, each step of the trajectory is executed, and if a disturbance occurs and assuming it is n-step recoverable as in Definition \ref{def:nsteprec}, it is handled by calling the Solver to plan a recovery trajectory.

\begin{algorithm}
\RestyleAlgo{ruled}
\caption{Execution under disturbance at Agent}\label{alg:distrubance}
\KwData{$S_i$ - Agent transition system for the $i^{th}$ Agent,  $\tau^{k}_{i,eigen}$ - Plan for agent $i$ as known by $i$ at time step $k$}
Execute time step $k$ to yield state $x^i_k=\zeta_i(k)$\;
\If{$x^i_k \neq \tau^{k}_{i,eigen}(k)$}{
    $n^*= \min \{n \ |\ x^i_k \in Reach_n(\tau^{k}_{i,eigen}(k+n)) \}$\;
    Generate a trajectory $\tau^{k+1}_{i,eigen}(k+)$ such that  $\tau^{k+1}_{i,eigen}(k)=x^i_k$,  $\tau^{k+1}_{i,eigen}(k+n^*)=\tau^{k}_{i,eigen}(k+n^*)$, and $\tau^{k+1}_{i,eigen}((k+n^*)+)=\tau^{k}_{i,eigen}((k+n^*)+)$
}

\end{algorithm}

\subsection{Planning under Disturbance}

When the perfect execution assumption no longer holds for the system under disturbance, conditions for Theorem \ref{theo:satisfaction} are not satisfied. Further, plans from Algorithm \ref{alg:planning} become infeasible. Hence, we state Algorithm \ref{alg:planning_dist} for planning under opportunistic synchronization with disturbance.

\begin{algorithm}
\RestyleAlgo{ruled}
\caption{Planning with Opportunistic Synchronization and Disturbance}\label{alg:planning_dist}
\KwData{ $S=(X,T)$- Multi-agent transition system,  $N_A$- Number of agents,  $\Xi^k $ -  Task site assignments at time step $k$, $x^k\in X$ - System state at $k$, $\tau^{k-1}_{plan}$ - Current plan for multi-agent system,  $\mathcal{O}$ - Solver, For all $i \in \{1,\dotsc, N_A\}$ $\tau^{k-1}_{i,plan}$ - Current plan for agent $i$ as known by planner,  $X^i_{sync}$ - Synchronization states for $i$.}
\KwResult{$\tau^{k}_{plan}$}
$\Phi \gets \emptyset \subseteq \mathbb{R}^2$\;
$\Psi \gets \emptyset \subseteq \mathbb{R}^2$\;
 \For{$i=1,\dotsc,N_A$}{
 \If{$x^k \notin X^i_{sync}$}
 {
$\mathcal{K} \gets \{k_i>k \ |\ \tau^{k-1}_{plan}(k_i) \in  X^i_{sync} \}$\;
  $\forall \kappa_i \in \mathcal{K}$, add $(i,\kappa_i)$ to $\Psi$ \;
  $\forall \kappa_i \in \{k,\dotsc,  \inf \mathcal{K}\} $, add $(i,\kappa_i)$ to $\Phi$ \;
 }
 }
 Call Solver $\mathcal{O}$ with $x^k$ to generate a $\tau^{k}_{plan} \models \Xi^k $ s. t constraint $\tau^{k}_{plan}(\kappa_i) \in X^{i}_{sync}$ is satisfied $\forall (i,\kappa_i)\in\Psi$ and  $\pi^{i}(\tau^{k}_{plan}(\kappa_i))=\tau^{k-1}_{i,plan}(\kappa_i))$ is satisfied $\forall (i,\kappa_i)\in\Phi$ \;
 \Return $\tau^{k}_{plan}$\;
\end{algorithm}

We have established the algorithm for execution, handling recoverable disturbances. However, when the agent recovers in $n^*$ steps, the originally planned trajectory will not be followed for at most $n^*$ steps. Plan synchronization at the next synchronization state must be ensured for task satisfaction.
\begin{lemma}[Synchronization under Disturbance]\label{lemma:sync}
     For a multi-agent transition system $S$ of $N_A$ agents, let a plan at time step $k$ be $\tau^k_{plan}$ which is generated from Algorithm \ref{alg:planning_dist}. For an agent $i = 1,\dotsc, N_A$, let $k_{i}^{*}=\min \{k_i > k | \tau^k_{plan}(k_i) \in X^i_{sync} \}$. Assuming execution under Algorithm \ref{alg:distrubance}, agent $i$ will achieve synchronization at time $k_{i}^{*}$, if any disturbance that occurs for any agent, at time $\kappa \in [k,k_{i}^{*}]$, is $(k_{i}^{*}-\kappa)$-step recoverable for all $\kappa$.
\end{lemma}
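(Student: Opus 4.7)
The plan is to show that $x_{k_i^*} \in X^i_{sync}$, which by Definition~\ref{def:syncact} directly triggers the synchronization update $\tau^{k_i^*}_{i,eigen}(k_i^*+) = \Pi^i(\tau^{k_i^*}_{plan}(k_i^*+))$ for agent $i$. I would derive the planning-side invariant from Algorithm~\ref{alg:planning_dist} and the execution-side recovery guarantee from Algorithm~\ref{alg:distrubance}, then glue them together.

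First I would establish that $k_i^*$ remains a sync state for agent $i$ in every replan on $[k, k_i^*]$. Since $k_i^*$ is by minimality the earliest future sync for $i$, agent $i$ is not at a sync state for any $t \in (k, k_i^*)$, so Algorithm~\ref{alg:planning_dist} puts $k_i^* \in \mathcal{K}$ and hence $(i, k_i^*) \in \Psi$, enforcing $\tau^t_{plan}(k_i^*) \in X^i_{sync}$ for every such $t$. The accompanying $\Phi$-constraint pins $\pi^i(\tau^t_{plan}(\kappa_i)) = \tau^{t-1}_{i,plan}(\kappa_i)$ for every $\kappa_i \in \{t, \dotsc, k_i^*\}$, so the agent-$i$ component at $k_i^*$ is invariant across replans. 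A symmetric argument between the successive syncs of any $j \neq i$ shows that the agent-$j$ component of $\tau^t_{plan}(k_i^*)$ equals $\pi^j(\tau^{t'_j}_{plan}(k_i^*))$, where $t'_j$ is $j$'s most recent sync time not exceeding $t$.

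Next I would analyze each agent $j$'s realized trajectory. Any disturbance at time $\kappa \in [k, k_i^*]$ is $(k_i^*-\kappa)$-step recoverable by hypothesis, so the minimum recovery horizon $n^*$ chosen by Algorithm~\ref{alg:distrubance} satisfies $n^* \leq k_i^* - \kappa$. Algorithm~\ref{alg:distrubance} then produces a new eigen plan with $\tau^{\kappa+1}_{j,eigen}(t) = \tau^\kappa_{j,eigen}(t)$ for all $t \geq \kappa + n^*$, and in particular at $t = k_i^*$. Inducting over the finite sequence of disturbance events for agent $j$ on $[k, k_i^*]$ and applying Definition~\ref{def:executn} on the disturbance-free sub-intervals yields $\zeta_j(k_i^*) = \pi^j(\tau^{t'_j}_{plan}(k_i^*))$.

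Combining over $j = 1, \dotsc, N_A$ using the component-wise consistency from the first step gives $\zeta(k_i^*) = \tau^t_{plan}(k_i^*) \in X^i_{sync}$ for the latest replan time $t$, which is precisely the precondition of Definition~\ref{def:syncact} for agent $i$ at time $k_i^*$. The hard part will be the bookkeeping in the second step: tracking, for each agent $j \neq i$, how its eigen-plan value at $k_i^*$ evolves through arbitrary interleavings of disturbance recoveries and $j$'s own intermediate syncs, and verifying that the planner's constraints keep all components mutually consistent at time $k_i^*$. The $\Psi$-constraint in Algorithm~\ref{alg:planning_dist} is exactly the mechanism that sustains $\tau^t_{plan}(k_i^*) \in X^i_{sync}$ across such updates, which is what allows the induction to close.
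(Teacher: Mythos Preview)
Your proposal is correct and follows essentially the same approach as the paper: use the $(k_i^*-\kappa)$-step recoverability hypothesis to bound $n^* \leq k_i^*-\kappa$ so that Algorithm~\ref{alg:distrubance} returns each agent to its planned state by time $k_i^*$, and use the constraints from Algorithm~\ref{alg:planning_dist} to ensure $\tau^t_{plan}(k_i^*) \in X^i_{sync}$ persists across replans. Your treatment is considerably more careful than the paper's---you explicitly track the multi-agent state componentwise via the $\Phi$/$\Psi$ constraints and induct over disturbance events, whereas the paper asserts reachability directly and in fact writes the inequality in the wrong direction ($n^* \geq k_i^*$ rather than your correct $n^* \leq k_i^* - \kappa$).
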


\begin{proof}
    If any disturbance at time step $\kappa \in [k,k_{i}^{*}]$, is $(k_{i}^{*}-\kappa)$-step recoverable for all $\kappa$, then  $\pi^i(\tau^k_{plan}(k_{i}^{*}))$ is always reachable for all agents $i=1,\dotsc, N_A$. Also, from Algorithm \ref{alg:planning_dist}, $\tau^k_{plan}(k_{i}^{*}) \in X^i_{sync}$. Execution under Algorithm \ref{alg:distrubance} always finds a path towards the closest reachable state at future time step $n^*= \min \{n \ |\ x^i_k \in Reach_n(\tau^{k}_{i,eigen}(k+n)) \}$\, where $n^* \geq k_{i}^{*}$ due to the $(k_{i}^{*}-\kappa)$-step recoverability. Therefore, executed recovery trajectory always include, $\pi^i(\tau^k_{plan}(k_{i}^{*}))$, thus agent $i$ always synchronizes at $k_{i}^{*}$.
\end{proof}

Then, we state the following lemma to establish requirements for task satisfaction under disturbance.

\begin{lemma}[Satisfaction under Disturbance]\label{lemma:task}
For a multi-agent transition system $S$ of $N_A$ agents and a task site assignment $\Xi_k$, let plan at time step $k$ be $\tau^k_{plan} \models \Xi_k$ which is generated from Algorithm \ref{alg:planning_dist}. For an agent $i = 1,\dotsc, N_A$, let $ k_{i}^{*}=\min \{ k_i > k | \tau^k_{plan}(k_i) \in X^i_{sync} \}$  be the next synchronization point. Also, let $k^{\prime}_{i}= \max \{k_i \leq k_{i}^{*} | \exists \hat{X} \in \Xi_k \ \text{where}\ \tau^k_{plan}(k_i) \in \hat{X} \}$. 
Further, let the corresponding task site be $\hat{X}^{\prime}$ where $\tau^k_{plan}(k^{\prime}_{max}) \in \hat{X}^{\prime}$. Assuming execution under Algorithm \ref{alg:distrubance}, agent $i$ will satisfy task site $\hat{X}^{\prime}$ at time $k^{\prime}_i$, if any disturbance that occurs at time $\kappa \in [k,k^{\prime}_{max}]$ for agent $i$ is $(k^{\prime}_{i} - \kappa)$-step recoverable for all $\kappa$.
\end{lemma}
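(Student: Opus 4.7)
The plan is to follow the same template as the proof of Lemma \ref{lemma:sync}, but with the target time shifted from the synchronization point $k_i^*$ to the task‑hitting time $k'_i$. The key observation is that the assumption gives recoverability with respect to the state $\tau^k_{plan}(k'_i)$, which by construction lies in the task site $\hat{X}'$. Once I show that agent $i$'s realized trajectory necessarily passes through $\pi^i(\tau^k_{plan}(k'_i))$ at time $k'_i$, the satisfaction conclusion follows immediately from the Task Site Assignment definition, because the projected state then witnesses that the multi-agent state is in $\hat{X}'$.

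First I would unpack the recoverability hypothesis: for every disturbance time $\kappa \in [k, k'_i]$ at agent $i$, the disturbed state $\bar{x}_\kappa$ satisfies $\bar{x}_\kappa \in Reach_{k'_i-\kappa}(\tau^{k}_{i,eigen}(k'_i))$. Under Algorithm \ref{alg:distrubance}, the minimum recovery horizon $n^*$ is selected, so $n^* \le k'_i - \kappa$, which means the recovery trajectory is anchored to a planned state at a time no later than $k'_i$. Second, I would invoke Algorithm \ref{alg:planning_dist}: because $k'_i \le k_i^*$, no planning event between $k$ and $k_i^*$ can alter the plan at $k'_i$ (the constraint set $\Phi$ freezes $\pi^i(\tau^{k}_{plan}(\kappa_i))$ for all $\kappa_i$ up to the next sync point), so $\tau^{k}_{plan}(k'_i)$ is a persistent reference. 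Third, I would argue inductively over the sequence of disturbances in $[k,k'_i]$: after each recovery, the new eigen trajectory $\tau^{\kappa+1}_{i,eigen}$ still contains $\pi^i(\tau^{k}_{plan}(k'_i))$ at time $k'_i$, since the recovery horizon cannot overshoot $k'_i$ and after the meeting point the plan is kept verbatim. Iterating gives $\zeta_i(k'_i)=\pi^i(\tau^{k}_{plan}(k'_i))$. Finally, since $\tau^{k}_{plan}(k'_i)\in\hat{X}'$, the agent satisfies $\hat{X}'$ at time $k'_i$ by the task-site definition.

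The main obstacle I anticipate is handling compound disturbances cleanly, where a disturbance at $\kappa_1$ triggers a recovery, and then a further disturbance occurs at $\kappa_2 > \kappa_1$ on the already-modified eigen trajectory. I have to verify that the hypothesis is still the right one to apply at $\kappa_2$: the recoverability in Definition \ref{def:nsteprec} is stated against the current eigen trajectory, whereas the lemma phrases it with respect to the original $\tau^k_{plan}$. I would bridge this by noting that after the first recovery, the segment of $\tau^{\kappa_1+1}_{i,eigen}$ at and beyond $k'_i$ coincides with $\pi^i(\tau^k_{plan})$, so $(k'_i-\kappa_2)$-step recoverability against $\tau^k_{plan}(k'_i)$ transfers to recoverability against the updated eigen trajectory at $k'_i$. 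A small secondary issue is the seemingly undefined symbol $k^{\prime}_{max}$ in the statement; I will read it as $k'_i$ (the latest task-hitting time before the next sync), which is consistent with the rest of the formulation and with the parallel structure of Lemma \ref{lemma:sync}.
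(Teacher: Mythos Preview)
Your proposal is correct and follows essentially the same route as the paper's proof: both argue that $(k'_i-\kappa)$-step recoverability forces $n^*\le k'_i-\kappa$ in Algorithm~\ref{alg:distrubance}, so the recovered eigen trajectory retains $\pi^i(\tau^k_{plan}(k'_i))$ and the task site $\hat{X}'$ is hit at time $k'_i$. Your treatment is in fact more careful than the paper's---you make explicit the plan-freezing role of Algorithm~\ref{alg:planning_dist}, the induction over compound disturbances, and the $k'_{max}$ versus $k'_i$ notational slip---whereas the paper's proof handles these points only implicitly in a single paragraph.
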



\begin{proof}
     To satisfy the task site $\hat{X}^{\prime}$ by an agent $i$ at time $k_i$, the state $\pi^{i}(\tau^k_{plan}(k^{\prime}_{i}))$ must be included in the eigen trajectory of agent $i$. If the disturbance at time $\kappa \in [k,k^{\prime}_{i}]$, is $(k^{\prime}_{i}-\kappa)$-step recoverable for all $\kappa$, then the state $\pi^{i}(\tau^k_{plan}(k^{\prime}_{i}))$ is always reachable for the agent $i$.
     Execution under Algorithm \ref{alg:distrubance} always finds a path to the closest reachable state $n^*$ steps ahead where $n^*= \min \{n \ |\ x^i_k \in Reach_n(\tau^{k}_{i,eigen}(\kappa+n)) \}$.  Here, $n^* \leq (k^{\prime}_{i}-\kappa)$ from $(k^{\prime}_{i}-\kappa)$-step recoverability. Therefore, if disturbed, the executed trajectory always includes the state $\pi^{i}(\tau^k_{plan}(k^{\prime}_{i}))$, thus agent $i$ always satisfies the task site $\hat{X}^{\prime}$ at time $k^{\prime}_{i}$.
\end{proof}

From the above lemmas \ref{lemma:sync} and \ref{lemma:task}, we state the following theorem on task satisfaction. 


\begin{theorem}[Satisfaction under Disturbance] \label{theo:Distrubed_satisfaction}
For a multi-agent transition system $S$ of $N_A$ agents, let $\Xi_k$ be the task assignment at time step $k$. Given a plan $\tau^k_{plan}$ from Algorithm \ref{alg:planning_dist}, let $k^{\prime}_{max} = \max_{i \in \{1,\ldots,N_A\}} k^{\prime}_i$ and $i_{max} = \arg \max_{i \in \{1,\ldots,N_A\}} k^{\prime}_i$, where $k^{\prime}_i=\max \{k_i \leq k{i}^{*} | \exists \hat{X} \in \Xi_k \text{ s.t. } \tau^{k}_{plan}(k_i) \in \hat{X} \}$ and $k_{i}^{*}=\min \{k_i > k | \tau^{k}_{plan}(k_i) \in X^i_{sync} \}$. Further, let the task site be $\hat{X}^{\prime}$ where $\tau^{k}_{plan}(k^{\prime}_{max}) \in \hat{X}^{\prime}$. Assuming execution under Algorithm \ref{alg:distrubance}, realized trajectory after $k$, $\zeta(k+)$ will satisfy task assignment $\Xi_k$, if following conditions hold;\\
1. For all $\gamma>k$, $\Bar{\Xi}^{\gamma}{+}=\emptyset$\\
2. All disturbances that occur at time $\gamma \in [k,k_{i}^{*}]$ that affect agent $i_{max}$ is $(k_{i}^{*} - \gamma)$-step recoverable for all $\gamma$.\\
3. All disturbances that occur at time $\gamma \in [k,k^{\prime}_{max}]$ that affect agent $i_{max}$ are $(k^{\prime}_{max} - \gamma)$-step recoverable for all $\gamma$.
\end{theorem}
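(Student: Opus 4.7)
My plan is to decompose the claim into two sub-goals, then invoke the two preceding lemmas in sequence. Sub-goal one is that the task assignment is invariant from $k$ onward so that satisfaction of $\Xi_k$ by $\zeta(k+)$ is well-defined; sub-goal two is that the realized trajectory actually visits every state class $\hat{X}_j \in \Xi_k$. For the first sub-goal I would apply Condition 1 together with the task-site update rule: since $\bar{\Xi}^{\gamma}_{+}=\emptyset$ for all $\gamma>k$, no new task sites are introduced, so any satisfaction accomplished by $\tau^k_{plan}$ (which was generated to satisfy $\Xi_k$ by Algorithm \ref{alg:planning_dist}) remains relevant. Note that Condition 1 is weaker than in Theorem \ref{theo:satisfaction} because we allow $\bar{\Xi}^{\gamma}_{-}\neq\emptyset$; I would argue that task-site removals never harm satisfaction since a site that is later removed still contributes to satisfying $\Xi_k$ when visited.

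For the second sub-goal I would proceed agent-by-agent, using the fact that $\tau^k_{plan} \models \Xi_k$ means each $\hat{X}_j \in \Xi_k$ appears on some projected sub-trajectory $\Pi^i(\tau^k_{plan})$ at some time index $k_i \leq k^{\prime}_i \leq k^{\prime}_{max}$. The critical agent is $i_{max}$ whose latest task site visit coincides with $k^{\prime}_{max}$; I would apply Lemma \ref{lemma:task} directly with Condition 3 to conclude that $i_{max}$ satisfies $\hat{X}^{\prime}$ at $k^{\prime}_{max}$. In parallel, Lemma \ref{lemma:sync} together with Condition 2 guarantees that $i_{max}$ reaches its next synchronization state at $k^{*}_{i_{max}}$, so the planner's belief stays consistent and $\tau^k_{i_{max},plan}$ remains valid on the horizon used in Algorithm \ref{alg:planning_dist}.

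To extend to task sites assigned to agents other than $i_{max}$, I would exploit the maximality of $k^{\prime}_{max}$: for any other agent $i$ whose task-site visit time $k^{\prime}_i$ satisfies $k^{\prime}_i \leq k^{\prime}_{max}$, the required recovery budget $(k^{\prime}_i - \gamma)$ is no larger than $(k^{\prime}_{max} - \gamma)$, and Algorithm \ref{alg:distrubance} always chooses the minimal $n^*$ so an eigen-trajectory returning to $\tau^k_{i,eigen}(k+n^*)$ still passes through the earlier planned state $\pi^i(\tau^k_{plan}(k^{\prime}_i))$. Invoking Lemma \ref{lemma:task} once per task site, then aggregating over all agents, yields $\zeta(k+) \models \Xi_k$.

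The main obstacle I anticipate is reconciling the worst-case formulation in Conditions 2--3 (stated only for agent $i_{max}$) with the global requirement that every site in $\Xi_k$ be satisfied, possibly by different agents. In particular, $n$-step recoverability is not monotone in $n$: knowing that a disturbance on $i_{max}$ is $(k^{\prime}_{max}-\gamma)$-step recoverable does not automatically give recoverability for other agents over shorter horizons. Bridging this gap will likely require a careful argument that the algorithm's min-$n^*$ recovery, combined with the structural fact that intermediate task sites already lie on the original plan projection, lets the worst-case recoverability for $i_{max}$ serve as a uniform certificate; alternatively, the theorem may need to be read as implicitly extending Conditions 2--3 across all agents via Lemmas \ref{lemma:sync} and \ref{lemma:task} applied in parallel.
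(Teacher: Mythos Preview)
Your decomposition into invariance of $\Xi_k$ (Condition 1) plus per-agent application of Lemmas \ref{lemma:sync} and \ref{lemma:task} is natural, and your first sub-goal matches the paper's opening move exactly. However, the obstacle you flag at the end is real, and your proposed workarounds do not close it. The conditions are stated \emph{only} for agent $i_{max}$; there is no monotonicity of $n$-step recoverability that lets you transfer $(k'_{max}-\gamma)$-step recoverability of disturbances on $i_{max}$ to $(k'_i-\gamma)$-step recoverability for a different agent $i$ (the backward reachable sets live in different agent state spaces $X_i$), and Algorithm \ref{alg:distrubance}'s choice of the minimal $n^*$ does not help because the recovery trajectory between $\kappa$ and $\kappa+n^*$ need not pass through any intermediate planned state. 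So a one-shot, all-agents-in-parallel argument cannot be pushed through from the hypotheses as written.

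The paper resolves this not by strengthening the per-agent argument but by changing its granularity: it proves only that \emph{one} task site, namely $\hat{X}'$ at time $k'_{max}$, is satisfied by agent $i_{max}$ in the current synchronization cycle (this is exactly Lemma \ref{lemma:task} with Condition 3), and that $i_{max}$ then reaches its next sync state $k^*_{i_{max}}$ (Lemma \ref{lemma:sync} with Condition 2). Since Condition 1 guarantees $\Xi_\gamma \subseteq \Xi_k$ for all $\gamma>k$, after resynchronization Algorithm \ref{alg:planning_dist} produces a new plan for a task set of strictly smaller cardinality, and the same argument applies again with a new $i_{max}$, $k'_{max}$. Iterating, $\Xi_k$ is exhausted in at most $|\Xi_k|$ synchronization cycles. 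In short: replace your parallel ``one lemma application per task site'' with a sequential ``one task site per synchronization cycle, then replan'' induction; that is the missing idea.
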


\begin{proof}
    From condition 1, if for all time steps $\gamma > k$, $=\Bar{\Xi}^{\gamma}_{+}=\emptyset$, then $\Xi_{\gamma} \subseteq \Xi_{\gamma-1} \subseteq \Xi_k$. Thus, for all $\gamma>k$, $\tau^{\gamma}_{plan}\models \Xi_k$. 
    Therefore, if each agent $i=1,\dotsc,N_A$ executes, $\tau^{\gamma}_{i,plan}=\Pi^i(\tau^{\gamma}_{plan})$, then $\Xi_k$ is satisfied.
    Given condition 2's recoverability, from Lemma \ref{lemma:sync}, synchronization is possible for agent $i_{max}$ at all time steps after $k$ at $k_{i_{max}}^{*}=\min \{k_{i_{max}} > k | \tau^k_{plan}(k_{i_{max}}) \in X^i_{sync} \}$. Therefore, all agents achieve synchronization as planned. 
    Given condition 3's disturbance recoverability, from Lemma \ref{lemma:task} at least 1 task site will be satisfied at timestep  $k^{\prime}_{max}$ by agent $i_{max}$ for all time steps after $k$. Thus, the realized trajectory $\zeta(k+)$ satisfies at least one task site in each synchronization step. If the cardinality of the task assignment, $\Xi_k$ is $N_{\Xi}$, then $\Xi_k$ is satisfied in at most $N_{\Xi}$ synchronization steps given that any disturbance satisfies the conditions 2 and 3. Therefore, we have $\zeta(k+) \models \Xi_k$.\end{proof}

\begin{figure}[h!]
\centering
\includegraphics[width=0.4\textwidth]{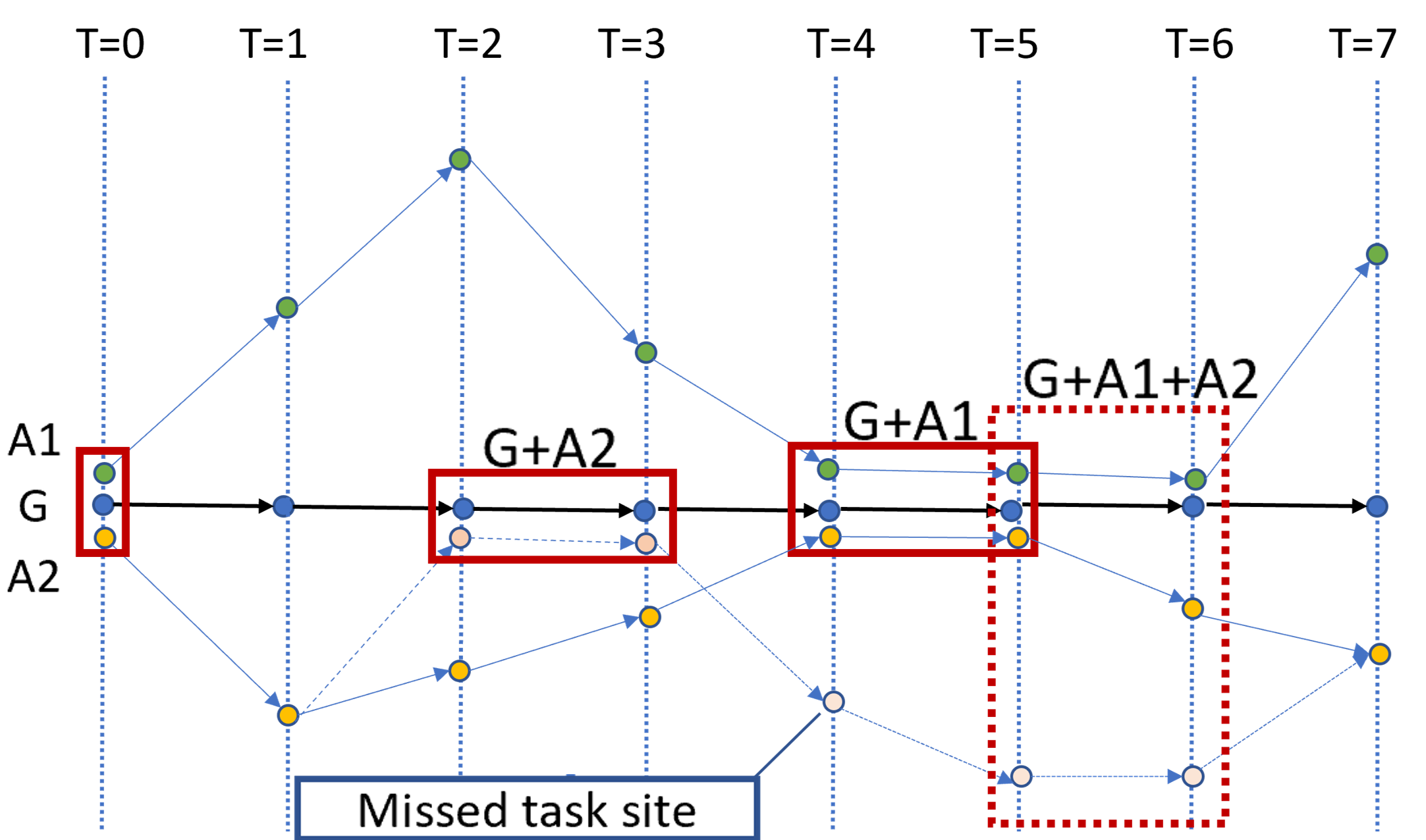}
\caption{Demonstration of plan execution with synchronization}
\label{fig:demo}
\end{figure}

We demonstrate the synchronization and replanning procedure in Fig. \ref{fig:demo}. Two UAVS $A_1$, $A_2$ and UGV $G$ start in sync, but at $T=1$, no UAV is in sync. At $T=2$, $A_2$ is to be synchronized, but it misses the sync state due to a disturbance. Therefore, it plans a recovery trajectory. All agents synchronize and replan at the next sync state at $T=4$. This illustrates how Algorithms 1, 2, and  3 are applied.

Now, we have proved that the proposed recovery algorithm achieves satisfaction, conditioned on disturbance recoverability. Next, we experimentally validate the proposed methods.

\section{EXPERIMENTAL RESULTS}

\begin{figure*}[t!]
\centering
\includegraphics[width=0.75\textwidth]{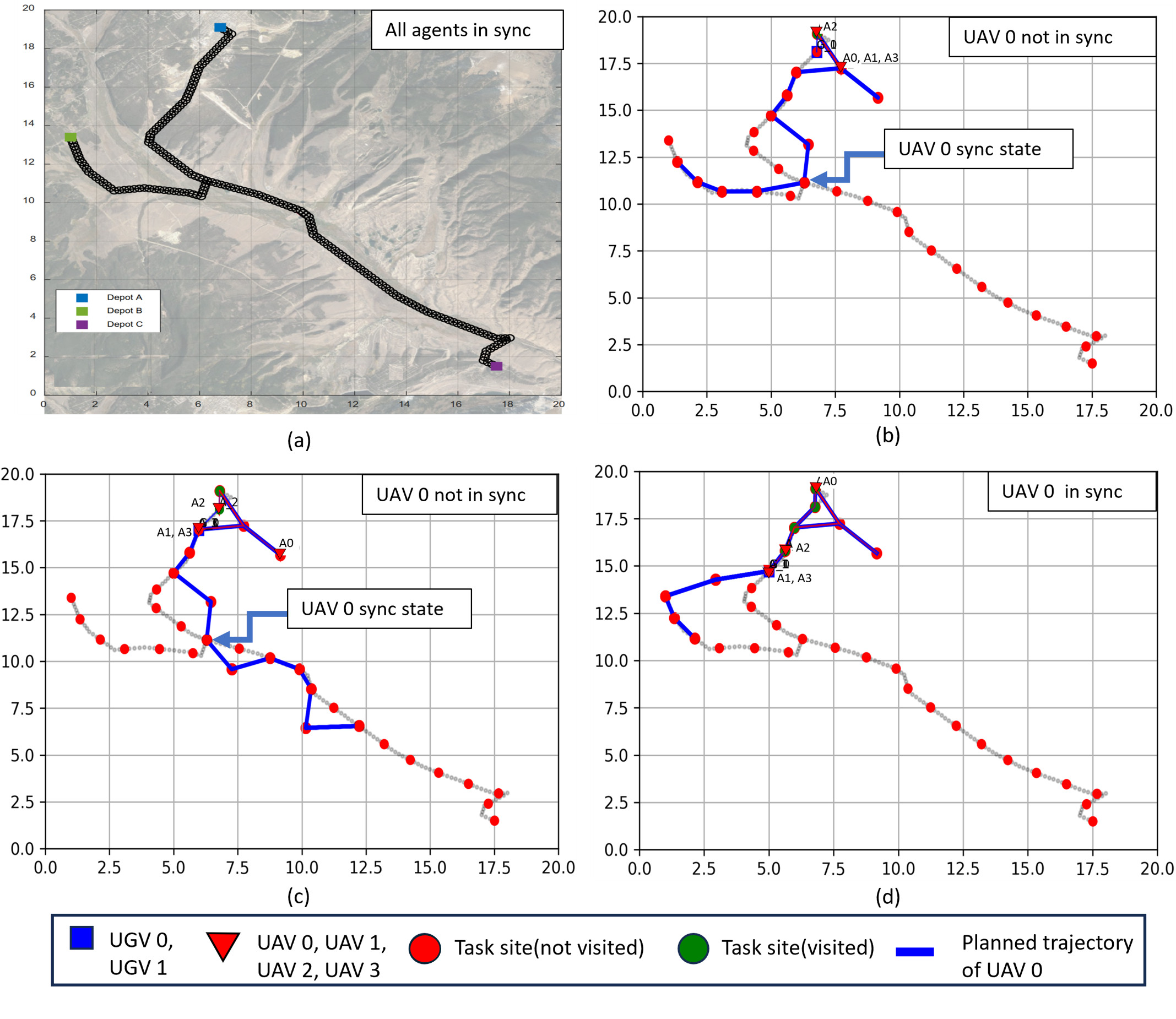}
\caption{Visualized execution trajectory with synchronization: (a) Task assignment is visiting the road. All agents are at dept A and in sync (b) UAV 0's plan is shown. As it is not in sync, the plan only changes beyond the sync state (c) UAV 0's plan changed beyond the sync state (d) A disturbance occurred and a recovery trajectory was planned for UAV 0.}
\label{fig:revisions}
\end{figure*}

\subsection{Implementation Setting}

\color{black}

We implement the proposed approach in the routing problem from \cite{tase}. Consider a 2D map as in Fig. \ref{fig:revisions}.a with a road and three depots. The task is to monitor the road with a team of UAVs and UGVs. UAVs synchronize with the planner when recharging. In the implementation, 2 UGVS and 4 UAVs are used to demonstrate the proposed algorithms. Each UGV forms a coalition with 2 UAVs and opportunistically synchronizes as shown in Fig. \ref{fig:improvem}. UGVs are restricted to the road and have an energy capacity of $B^{\mathrm{UGV}}=25.01 \,\mathrm{MJ}$. The power consumption is $1.05(464.8v_g +356.3)\mathrm{W}$ where $v_g$ is the speed. In contrast, the UAVs can travel freely within the map, but have limited energy of $B^{\mathrm{UAV}}=287.7\ \mathrm{kJ}$. The power consumption curve is $1.05(0.0461v_a^3-0.5834v_a^2- 1.8761v_a+ 229.6)\mathrm{W}$ where $v_a$ is the speed. UAV recharge rate depends on the current energy level. We denote the coordinate of $j^{th}$ agent as $(p^j_{x},p^j_{y})\in G $, energy level as $e_j\in [0,B^j_{max}]$ where $B^j_{max}$ is the battery capacity, and agent status $f_j$ encoding type and docking status. Then, for $N_A$ agents, the state of agent $j$ is $x_j=[p^j_{x},p^j_{y},e_j, f_j]$. The transition system  at time $\gamma$ is defined as $S=(X,T)$ such that $X=\{x_1, x_2,..., x_{N_A},\gamma \}$ and $T=\{(x,x^\prime)\ |\ x,x^\prime \in X\}.$ A transition is defined as an evolution of state in constant time under the dynamics. The continuous spatial, energy, and time state increase the planning complexity as it involves searching for solutions in multi-dimensional statespace.
\color{black}
\begin{figure}[h!]
\centering
\includegraphics[width=0.45\textwidth]{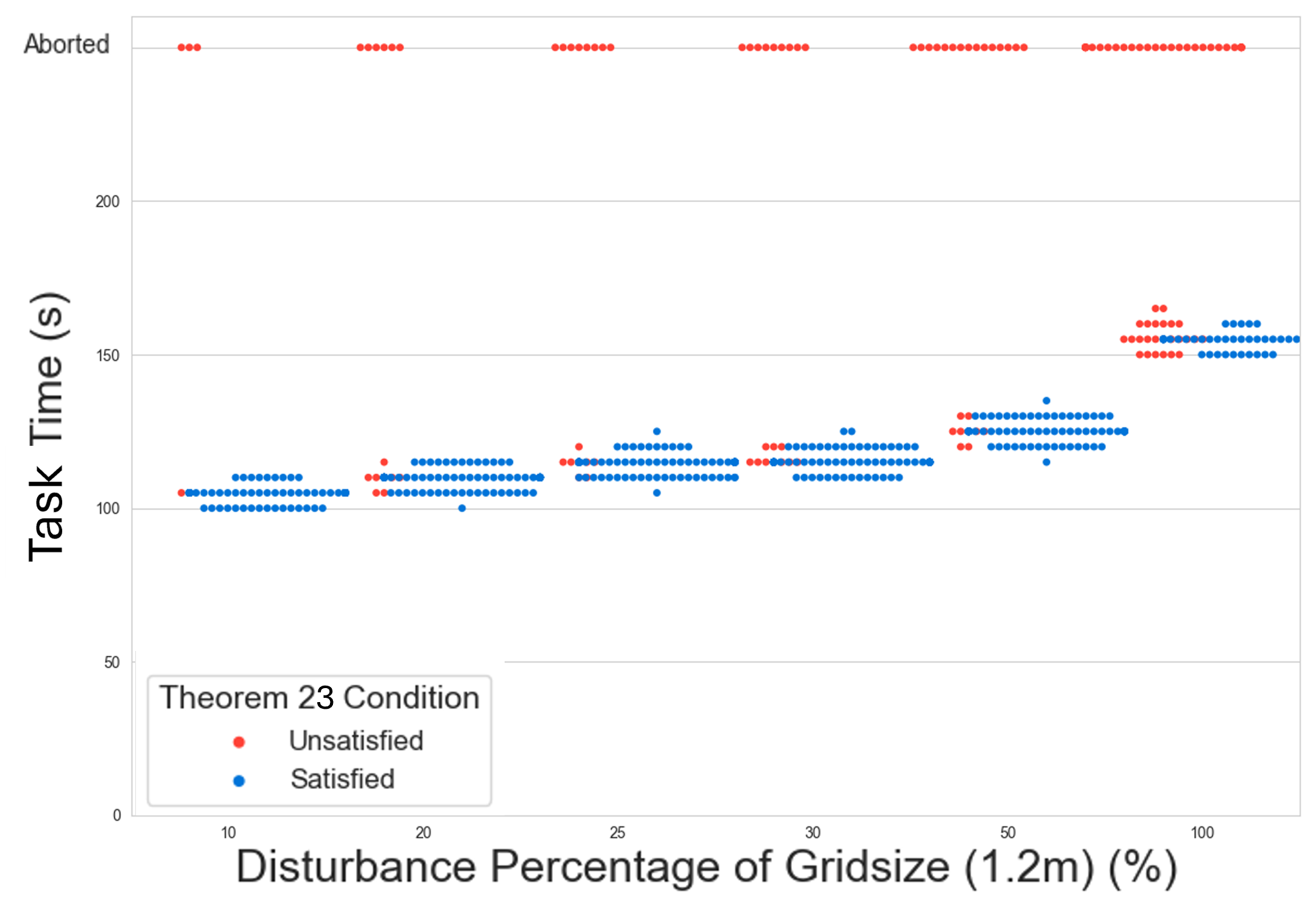}
\caption{Disturbance vs. Task time. Aborted instances and Theorem \ref{theo:Distrubed_satisfaction} condition satisfaction are also shown.}
\label{fig:distrub}
\end{figure}

\subsection{Results: Disturbances vs. Task time}
We observe how the task completion time (Task time) when applying increasing levels of disturbances. Fig. \ref{fig:distrub} confirms the theoretical results in Theorem \ref{theo:Distrubed_satisfaction}. When the conditions in Theorem \ref{theo:Distrubed_satisfaction} are satisfied,s the system is guaranteed to recover from the disturbance, and the realized trajectory satisfies the task assignment. For the applied disturbances, the recovery time depends on the severity, and in some instances, the task becomes infeasible and aborted.

\color{black}

\subsection{Results: Visualized plan execution and synchronization}

We present a result visualization in Fig. \ref{fig:revisions}. At each step, the trajectory is adjusted. In sub-figure (b) and (c), we see the plan is only changed beyond the synchronization point. Following the theoretical derivations, the plan before synchronization is kept constant as it cannot be updated before the agent is in sync. However, at step (d), a disturbance has occurred and UAV 0 has revised trajectory through depot A and it is currently at depot A. Therefore, it is in synchronization, so the entire plan is updated for UAV 0.


\color{black}
\section{CONCLUSIONS}

In this paper, we establish theoretical guarantees on the disturbance handling strategy in an iterative planning method for a multi-agent system. When disturbances occur when motion planning methods are deployed, such guarantees are essential to ensure performance. We introduce a quantification of the disturbances based on their recoverability. Then, we prove that when the disturbances are recoverable, conditions can be derived to ensure task satisfaction. This allows for establishing guarantees for performance, given the recoverability metrics of the disturbances.
We also analyze the plan synchronization problem when opportunistic communication architecture is used with the iterative planning strategy. The hybrid communication architecture uses a mesh network and a hierarchical implementation to distribute routing commands. At each replanning step, when a new plan is found for the system, synchronization should occur to transmit the plan to each agent. To this end, we derived the requirements for plan synchronization in an opportunistic communication framework for the iterative planning strategy. 
By addressing these issues, we contribute to the theoretical analysis of iterative planning methods in multi-agent systems, laying the groundwork for more robust and efficient operations in real-world scenarios. We believe future research could explore the performance analysis of multi-agent systems in dynamic environments.










\bibliographystyle{IEEEtran}
\bibliography{bibdata}

\end{document}